\tikzstyle{medium box}=[fill=white, draw=black, shape=rectangle, minimum width=1.75cm, minimum height=1cm]
\tikzstyle{dot}=[fill=black, draw=black, shape=circle, inner sep=0pt, minimum size=3pt]
\tikzstyle{horizontal box}=[fill=white, draw=black, shape=rectangle, minimum width=3cm, minimum height=1cm]
\tikzstyle{small box}=[fill=white, draw=black, shape=rectangle, minimum width=0.5cm, minimum height=0.5cm]
\tikzstyle{smallish box}=[fill=white, draw=black, shape=rectangle, minimum width=0.8cm, minimum height=0.3cm]
\tikzstyle{large box}=[fill=white, draw=black, shape=rectangle, minimum width=2.5cm, minimum height=1.5cm]
\tikzstyle{long horizontal box}=[fill=white, draw=black, shape=rectangle, minimum width=6cm, minimum height=0.75cm]
\tikzstyle{left-discard}=[fill=white, draw=black, shape=circle, ground, rotate=-90]
\tikzstyle{square}=[fill=white, draw=black, shape=rectangle, minimum width=1cm, minimum height=1cm]
\tikzstyle{up-discard}=[fill=white, draw=black, shape=circle, rotate=180, ground]
\tikzstyle{medium horizontal box}=[fill=white, draw=black, shape=rectangle, minimum width=4.5cm, minimum height=1cm]
\tikzstyle{long dotted box}=[fill=white, draw=black, shape=rectangle, minimum width=7cm, minimum height=0.75cm, dotted, fill opacity=0, text opacity=1, semithick]
\tikzstyle{large dotted box}=[fill=white, draw=black, shape=rectangle, minimum width=2.5cm, minimum height=1.5cm, dotted, fill opacity=0, text opacity=1, semithick]
\tikzstyle{dashed line}=[-, dashed]
\tikzstyle{arrow}=[->, font={\scriptsize}, inner sep=2pt]
\tikzstyle{curly braces}=[-, decorate, decoration={{brace,amplitude=3pt,raise=4pt}}, font={\scriptsize}]
\title{
  Translating Extensive Form Games\\to Open Games with Agency
}
\author{
    Matteo Capucci,
    Neil Ghani,\\
    J{\'e}r{\'e}my Ledent,
    Fredrik Nordvall Forsberg
    \institute{Mathematically Structured Programming Group, University of Strathclyde, \'Ecosse Libre}
}
\begin{document}

    \maketitle

    \begin{abstract}
        We show open games cover extensive form games with both perfect and imperfect information. Doing so forces us to address two current weaknesses in open games: the lack of a notion of player and their agency within open games, and the lack of choice operators. Using the former we construct the latter, and these choice operators subsume previously proposed operators for open games, thereby making progress towards a core, canonical and ergonomic calculus of game operators. Collectively these innovations increase the level of compositionality of open games, and demonstrate their expressiveness.
    \end{abstract}

    \section{Introduction}

Game theory is the study of how players make decisions.
Crucially, there is typically more than one player and players have competing and even divergent goals: Nash equilibria describe a choice of strategy for each player such that, if any one player were to choose a different strategy, the outcome for that player would be worse.
Thus no player is incentivized to change strategy and an equilibrium is established.

There are many ways to formalise games; one of the most commonly used are \emph{extensive form games}.
These are decorated trees whose path structure reflects decisions taken sequentially, whose internal nodes reflect choices to be made by players, and whose leaves represent the utility accrued via the choices made.
Extensive form games have a device (\emph{information sets}) which allows users to specify situations in which players may face decisions of which they do not know the exact location in the tree.
Such situations happen, for example, if players make decisions in parallel or if some information is hidden (e.g.~in a card game, the opponent's hand is unknown).
When this happens, we speak of \emph{imperfect information games}, to be contrasted with \emph{perfect information} ones.

This formalism for game theory suffers from the significant drawback of not being \emph{compositional} in that one cannot construct larger games ---including their equilibria--- from smaller ones.
Open games~\cite{julesGames} are a new foundation for game theory which offers a simultaneous treatment of the sequential and concurrent nature of games and, crucially, does so compositionally.
Apart from compositionality, working under the principles of categorical logic and type theory means that rigour is increased, models get conceptually sharper, and software is more easily extracted from open games \cite{julesHaskell}.
Despite this, a convincing procedure to convert extensive form games into open games has been lacking until now.
This paper provides such a translation, and in doing so introduces some important innovations:
\begin{itemize}
  \item Open games have so far lacked an explicit notion of player, which is fundamental in game theory, and this lack hinders attempts to relate open games to extant game theory.
  To deal with this problem, we introduce \emph{open games with agency}, an improved version of open games based on parametrised lenses and selection functions.
  \item We define choice operators for open games with agency.
  Choice operators have so far been missing from the operators which are supported by open games, although limited forms have been used in the software implementation of open games.
  \item We show that extensive form games can be translated into open games with agency preserving the decision structure and with the same equilibria.
  Our proof is a significant improvement on the 8-page proof in~\cite{julesPhD} for perfect information games, while the result for imperfect information games is new.
  Our translations rely crucially on the new choice operators and reparametrisation operations.
\end{itemize}

\paragraph{Outline of the paper.}
In~\cref{sec:ext_games}, we introduce extensive form games as an inductive data type. In~\cref{sec:open_games}, we introduce the new open games with agency, and~\cref{sec:operators} discusses the associated operator calculus.~\cref{sec:ext_translation} contains the translation from the former to the latter.
Finally, in~\cref{sec:conc}, we draw conclusions and discuss further work.

\paragraph{Acknowledgements.}
The authors would like to thank Scott Cunningham, Clemens Kupke, Glynn Winskel and Radu Mardare for their thoughtful feedback, help and discussions, as well as Jules Hedges, Bruno Gavranovi\'{c} and the rest of MSP group for invaluable input and conversations.

\paragraph{Notation.} We use ${\cal P} : \msf{Set} \rightarrow \msf{Set}$ for the covariant powerset functor mapping a set to its set of subsets.
Given a set-indexed collection of sets $Y : I \rightarrow \msf{Set}$, the dependent sum $(\Sigma i:I)\,Y(i)$ is the disjoint union of all of the sets in the collection, while the dependent function space $(\Pi i:I)\,Y(i)$ is the set of functions mapping an input $i \in I$ to an element of $Y(i)$.
We may also use ``Agda~\cite{norell:thesis} notation'' $(i : I) \to Y(i)$ for the dependent function space.
Note that a function $k : (\Sigma i:I)\,Y(i) \to R$ is the same as an $I$-indexed collection of functions $k_i : Y(i) \to  R$.
Morphisms compose diagrammatically.
We write $\N$ for the set of natural numbers, $\N^+$ for the positive natural numbers, and $[n] = \{0, \ldots, n-1\}$ for the canonical $n$-element set.


\section{Extensive Form Games}
\label{sec:ext_games}

\paragraph{Extensive form games with perfect information.} Perfect information games are defined in~\cite{eogt} using a non-inductive formulation. We present this definition categorically as it is more compact, supports recursive function definitions and inductive forms of reasoning, and smoothly generalises to richer semantic spaces, e.g.\ metric, probabalistic and topological spaces. Throughout this section, let $P$ be a set of players and $R$ a preordered set of rewards. We write $R^P = P \to R$ for the set of utility functions.

\begin{definition}
    The set $\PETree$ of extensive form game trees with perfect information is the initial algebra of the functor $F:\msf{Set} \to \msf{Set}$ defined by
    \begin{equation*}
      FX = R^P + P \times (\Sigma\, n:\N^+)([n] \to X).
    \end{equation*}
    This supports the Haskell-like data type
    \begin{equation*}
      \mbox{data} \;\; \PETree = \Leaf \;\; R^P\quad|\quad \Node \;\; P \;\; (n : \N^+)\;\; ([n] \to  \PETree)
    \end{equation*}
\end{definition}

Thus a tree $T \in \PETree$ is either a leaf containing a reward vector, or an internal node labelled with a player $p \in P$ (who is to play at that point in the game) and of \emph{arity} $n \in \N^+$ (the number of different moves available to player $p$ at that point). As mentioned, being an inductive type $\PETree$ supports recursive function definitions by pattern matching. We can take advantage of this to write down classical definitions somewhat more nimbly.
For instance, the set of strategies of player $p \in P$ is defined:
\begin{align*}
  & \strategies_\PET : \PETree \to P \to \msf{Set} \\
  & \strategies_\PET \; (\Leaf \; v)\; p = \One \\
  & \strategies_\PET \; (\Node \; q \; n \; f)\; p = (\msf{if}\ p \equiv q\ \msf{then}\ [n]\ \msf{else}\ \One) \times (\Pi\,{m \in [n]})\, (\strategies_\PET\; (f\; m)\; p)
\end{align*}
For a given tree $T \in \PETree$ and player~$p$, a strategy $\omega \in \strategies_\PET\;T\;p$ consists of a choice, for every $n$-ary node labeled by~$p$, of a branch $m \in [n]$.
Note that, in accordance with the game theory literature, player~$p$ chooses a move in all their nodes in the tree --- even in those that are not reachable according to their previous choices. This is because players need to consider alternatives to chosen strategies both for themselves and others to counterfactually evaluate equilibria.

A \emph{strategy profile} consists of a strategy for each player $p \in P$. For conciseness, we write $\Omega_p = \strategies_\PET\;T\;p$ for the set of strategies of player~$p$, and $\Omega = \Pi_p \Omega_p$ for the set of strategy profiles of the game~$T$.
It is not hard to see that the set $\Omega$ can equivalently be defined inductively  as follows:
\begin{align*}
  & \profiles_\PET : \PETree \to \msf{Set}\\
  & \profiles_\PET \; (\Leaf \; v) = \One\\
  & \profiles_\PET \; (\Node \; p \; n \; f) = [n] \times (\Pi\,{m \in [n]}) \, (\profiles_\PET\; (f\; m))
  \intertext{We also define the set $\paths_\PET\; T$ of possible game trajectories, which eventually lead to the reward vector stored at the leaf (as computed by $\payoff_\PET$ below).}
  & \paths_\PET : \PETree \to \msf{Set} \\
  & \paths_\PET \; (\Leaf \; v) = \One \\
  & \paths_\PET \; (\Node \; p \; n \; f) = (\Sigma\,{m \in [n]}) \; (\paths_\PET\; (f\; m))
\end{align*}
A game trajectory $\pi$ determines a payoff $\payoff_\PET \; T \;\pi$ for each player, and a strategy profile $\omega$ determines a unique path $\play_\PET \; T \; \omega$ from the root to a leaf, obtained by playing the strategies against each other.
\begin{align*}
  & \payoff_\PET : (T : \PETree) \to (\paths_\PET\;T) \to R^P\\
  & \payoff_\PET \; (\Leaf \; v) \; \unit = v\\
  & \payoff_\PET \; (\Node \; p \; n \; f) \; (m, \pi) = \payoff_\PET\; (f\; m) \; \pi\\[2ex]
  & \play_\PET : (T : \PETree) \to (\profiles_\PET\;T) \to (\paths_\PET\; T) \\
  & \play_\PET \; (\Leaf \; v) \; \unit = \unit \\
  & \play_\PET \; (\Node \; p \; n \; f) \; (m, \phi) = (m, \play_\PET\; (f\; m)\; (\phi\; m))
\end{align*}

\noindent Fix a game $T \in \PETree$.
Given a strategy profile $\omega \in \Omega$ and a $p$-strategy $\omega_p \in \Omega_p$, we write $\omega[p \leftarrow \omega_p]$ for the strategy profile obtained by replacing the $p$-th component of~$\omega$ by $\omega_p$.
Recall that the set~$R$ of rewards is ordered.

\begin{definition}[Nash equilibrium]
  The \emph{Nash equilibrium} predicate $\msf{Nash} : (T : \PETree) \to {\cal P} (\profiles_\PET\;T)$ holds for a strategy profile $\omega$ of a game $T$ if and only if for every player $p \in P$, and for every $\omega_p \in \Omega_p$,
  \[
    \payoff_\PET \; T \; (\play_\PET\;T\;\omega) \; p \;\; \geq \;\; \payoff_\PET \; T \; (\play_\PET\;T\;\omega[p \leftarrow \omega_p]) \; p \enspace .
  \]
\end{definition}

\noindent In plain words, a strategy profile is a Nash equilibrium if the payoff each player receives cannot be improved by that player unilaterally changing their strategy.

\begin{example}
  Two perfect information games are shown below.
  Nodes are labelled by players $p_1$, $p_2$ or~$p_3$.
  All nodes must choose between two moves, $L$ and~$R$. Utility at the leaves are for $p_1, p_2, p_3$ in that order.
  \begin{center}
  \begin{tikzpicture}[every level 0 node/.style={circle,draw,inner sep=1.2,fill=black},
  every level 1 node/.style={circle,draw,inner sep=1.2,fill=black},
  level distance=.5in, sibling distance=1em,
  edge from parent path={(\tikzparentnode) -- (\tikzchildnode)}]
  \Tree [.\node[label=above:{$p_1$}] {};
    \edge node [pos=0.6, auto=right] {L}; [.\node[label=left:{$p_2$}] {};
      \edge node [pos=0.8, auto=right] {L}; [.\node {$(1,3,1)$};]
      \edge node [pos=0.8, auto=left] {R}; [.\node {$(4,0,4)$};]
    ]
    \edge node [pos=0.6, auto=left] {R}; [.\node[label=right:{$p_3$}] {};
      \edge node [pos=0.8, auto=right] {L}; [.\node {$(0,0,0)$};]
      \edge node [pos=0.8, auto=left] {R}; [.\node {$(8,5,8)$};]
    ]
  ]
  \end{tikzpicture}
  \qquad \qquad
  \begin{tikzpicture}[every level 0 node/.style={circle,draw,inner sep=1.2,fill=black},
  every level 1 node/.style={circle,draw,inner sep=1.2,fill=black},
  level distance=.5in, sibling distance=1em,
  edge from parent path={(\tikzparentnode) -- (\tikzchildnode)}]
  \Tree [.\node[label=above:{$p_1$}] {};
    \edge node [pos=0.6, auto=right] {L}; [.\node[label=left:{$p_2$}] {};
      \edge node [pos=0.8, auto=right] {L}; [.\node {$(1,3)$};]
      \edge node [pos=0.8, auto=left] {R}; [.\node {$(4,0)$};]
    ]
    \edge node [pos=0.6, auto=left] {R}; [.\node[label=right:{$p_1$}] {};
      \edge node [pos=0.8, auto=right] {L}; [.\node {$(0,0)$};]
    \edge node [pos=0.8, auto=left] {R}; [.\node {$(8,5)$};]
    ]
  ]
  \end{tikzpicture}
  \end{center}
  The game on the left has three players, each of them making one decision.
  The strategy profile $(L,L,L)$ is a Nash equilibrium of this game, which yields the utility $(1,3,1)$.
  The game on the right has only two players, with~$p_1$ making two decisions.
  In this second game,  $((L,L),L)$ is not a Nash equilibrium because~$p_1$ can change strategy to $((R,R),L)$ and get a better reward.
  In the first game, even though~$p_1$ and~$p_3$ always get the same reward, they are different players and so cannot similarly coordinate changes to their strategies.
\end{example}

\paragraph{Extensive form Games with Imperfect Information.}
Imperfect information games are a refinement of perfect information games via the addition of \emph{information sets}.
The intuition is that a player only knows that they are at \emph{some} node in the set, but not which one, and not the full state of the game at that point.

As before, we replace the traditional definition of information sets~\cite{eogt} with an inductive data type where information sets are labels on nodes. The conditions that every node in an information set belongs to the same player and that nodes labelled by the same information set should have the same arity now hold by construction.
We represent information sets with the following data:
\begin{itemize}
  \item A set $I$ of \emph{labels} for information sets. Each label is thought of as a set of nodes.
  \item A function $\belongs : I \to P$ that assigns a player to each information set. This ensures that every node labelled with that information set has the same player.
  \item A natural number for each information set $n : I \to \N^+$. This ensures that every node labelled with that information set has the same arity.
\end{itemize}

\begin{definition}
    The set $\IETree$ of \emph{extensive-form game trees with imperfect information} with players in $P$, rewards in $R$, and information sets $(I, \belongs, n)$ is defined as the inductive data type
 \[
   \mbox{data} \;\; \IETree = \Leaf \;\; R^P \quad|\quad \Node \;\; (i : I) \;\; ([n\; i] \to  \IETree)
 \]
\end{definition}
\noindent Notice that every $\PETree$ is an $\IETree$ where we take as information sets the nodes of the $\PETree$ --- that is, no two nodes are in the same information set.
This means that, contrary to the usual definition, an $\IETree$ might have information sets that are never used or even players that never play, though this is clearly pathological and always avoidable.

A strategy for player~$p$ in a tree $T$ is a choice of a move $m \in [n\; i]$ for each information set that occurs in the tree (we write $I \cap T$ for these information sets) and belongs to~$p$.
Note that this means that they must choose the same move for every node in a given information set, by construction.
A strategy profile is a choice of a move for each information set that occurs in the tree:
\[
  \strategies_\IET\;T\;p \coloneqq \prod_{\substack{i \in I \cap T \\ \msf{\scriptsize belongs}\;i = p}} [n\; i]
\qquad \qquad
\profiles_\IET\;T \coloneqq \Pi_{p \in P} (\strategies_\IET\;T\; p) = \Pi_{i \in I \cap T} [n\; i]
\]
All the other notions (paths, payoffs, Nash and SPE equilibria) are defined in the same way as in the case of perfect information games.

\begin{example}
\label{ex:IET}
  Two imperfect information  games are depicted below.
\begin{center}
  \begin{tikzpicture}[every level 0 node/.style={circle,draw,inner sep=1.2,fill=black},
  every level 1 node/.style={circle,draw,inner sep=1.2,fill=black},
  level distance=.5in, sibling distance=1em,
  edge from parent path={(\tikzparentnode) -- (\tikzchildnode)}]
  \Tree [.\node[label=above:{$p_1$}] {};
    \edge node [pos=0.6, auto=right] {$L$}; [.\node[label=left:{$p_2$}] {};
      \edge node [pos=0.8, auto=right] {$\ell$}; [.\node {$(1,4)$};]
      \edge node [pos=0.8, auto=left] {$r$}; [.\node {$(0,0)$};]
    ]
    \edge node [pos=0.6, auto=left] {$R$}; [.\node[label=right:{$p_2$}] {};
      \edge node [pos=0.8, auto=right] {$\ell$}; [.\node {$(5,2)$};]
      \edge node [pos=0.8, auto=left] {$r$}; [.\node {$(0,2)$};]
    ]
  ]
  \end{tikzpicture}
  \qquad \qquad
  \begin{tikzpicture}[every level 0 node/.style={circle,draw,inner sep=1.2,fill=black},
  every level 1 node/.style={circle,draw,inner sep=1.2,fill=black},
  level distance=.5in, sibling distance=1em,
  edge from parent path={(\tikzparentnode) -- (\tikzchildnode)}]
  \Tree [.\node[label=above:{$p_1$}] {};
    \edge node [pos=0.6, auto=right] {$L$}; [.\node[label=left:{$p_2$}] (P2left) {};
      \edge node [pos=0.8, auto=right] {$\ell$}; [.\node {$(1,4)$};]
      \edge node [pos=0.8, auto=left] {$r$}; [.\node {$(0,0)$};]
    ]
    \edge node [pos=0.6, auto=left] {$R$}; [.\node[label=right:{$p_2$}] (P2right) {};
      \edge node [pos=0.8, auto=right] {$\ell$}; [.\node {$(5,2)$};]
      \edge node [pos=0.8, auto=left] {$r$}; [.\node {$(0,2)$};]
    ]
  ]
  \draw[dashed] (P2left) --  (P2right);
  \end{tikzpicture}
\end{center}
The game on the left happens to be a perfect information game,
where player~$p_2$ is able to observe the move made by~$p_1$, and thus has to make two decisions (one for each subtree).
One possible Nash equilibrium of this game is $(L, (\ell,r))$, yielding the utility of $(1,4)$. The game on the right has a non-trivial information set represented by a dashed line: the two nodes labelled by~$p_2$ are in the same information set, so --- crucially --- $p_2$ must play the same strategy irrespective of which move
$p_1$ makes. In effect, they play concurrently. The only Nash equilibrium of this game is $(R,\ell)$, yielding the utility~$(5,2)$.
\end{example}


    \section{Open Games with agency}
\label{sec:open_games}
\vspace{1ex}
\paragraph{Lenses and parametrised lenses.}
\label{sec:para_lens}
Lenses are bidirectional maps between pairs of sets~\cite{lenses}:
given four sets $X,Y,R,S$, a \emph{lens} L from $(X,S)$ to $(Y,R)$, written $L : \Lens{X}{S}{Y}{R}$, consists of a pair of functions $\get_L : X \to Y$ and $\put_L : X \times R \to S$.
Lenses are morphisms in a category whose objects are pairs of sets.
The tensor product or \emph{parallel composition} of two lenses $L : \Lens{X}{S}{Y}{R}$ and $L' : \Lens{X'}{S'}{Y'}{R'}$ is a lens $L \otimes L' : \Lens{X \times X'}{S \times S'}{Y \times Y'}{R \times R'}$ making $\otimes$ a bifunctor on the category of lenses --- see~\cite{riley2018categories} for details and generalizations.

A \emph{parametrised lens} from $(X,S)$ to $(Y,R)$ with parameters $(P,Q)$ is a lens from $(P \times X, Q \times S)$ to $(Y,R)$. We denote by $\PLens{(P,Q)}{X}{S}{Y}{R}$ such parametrised lenses. Thus, $\PLens{(P,Q)}{X}{S}{Y}{R} = \Lens{P \times X}{Q \times S}{Y}{R}$. Note in particular that an ordinary lens can be considered a parametrised lens with parameters $(\I, \I)$. Crucially, parametrised lenses do not compose in the same way as regular lenses:
given $K : \PLens{(P,Q)}{X}{S}{Y}{R}$ and $L : \PLens{(P',Q')}{Y}{R}{Z}{T}$, their sequential composition has type $K \fatsemi L : \PLens{(P \times P', Q \times Q')}{X}{S}{Z}{T}$.
We usually represent parametrised lenses as boxes with the parameters drawn on top, and arrows indicating the flow of information. This allows to read off the type of a composite as in the picture below:

\begin{figure}[H]
	\sctikzfig[1.2]{para-lens-composition}
\end{figure}

\noindent The composite $K \fatsemi L$ is defined by the following composition of lenses:
\begin{center}
	\begin{tikzcd}
		(P\times P'\times X, Q\times Q'\times S) \arrow[rr, "\id_{(P',Q')}\, \otimes\, K"] &&
		(P' \times Y, Q' \times R) \arrow[r, "L"] &
		(Z,T)
	\end{tikzcd}
\end{center}
\noindent Given a lens $w : \Lens{P'}{Q'}{P}{Q}$ and a parametrised lens $L : \PLens{(P,Q)}{X}{S}{Y}{R}$, the \emph{reparametrisation of $L$ along $w$}, written $\reparam{w}{L} : \PLens{(P',Q')}{X}{S}{Y}{R}$, is defined by:
\begin{center}
	\begin{tikzcd}
		(P'\times X, Q'\times S) \arrow[rr, "w\, \otimes\, \id_{(X,S)}"] &&
		(P \times X, Q \times S) \arrow[r, "L"] &
		(Y,R)
	\end{tikzcd}
\end{center}

\noindent Finally, the parallel composition of parametrised lenses is the same as their parallel composition viewed as regular lenses, so we still denote it by~$\otimes$.
In particular, if $L$ and $L'$ have parameters $(P,Q)$ and $(P',Q')$, respectively, then $L \otimes L'$ has parameters $(P\times P',Q \times Q')$.

Parametrised lenses were first introduced by \cite{baf}, although with reparametrisations trivialized by a quotient. Instead, in this work reparametrisations are of utmost centrality.
See \cite{cybernetics} for a general framework behind this intuition.

\paragraph{Selection functions.}
Given sets $X$ and $S$, a multivalued selection function over $(X,S)$ is an element of the set
$\Sel(X,S) = (X \to S) \to \Pow{X}$.
Intuitively, $X$ is a set of \emph{options} and $S$ a set of \emph{rewards}. A map $X \to S$ is then a \emph{valuation} of said options, and a selection function $\varepsilon \in \Sel(X,S)$ represents a way to select options given the information of their value. The canonical selection function is $\argmax$, which returns those inputs on which a function reaches a maximal value.
For a comprehensive survey on selection functions, with applications to game theory and proof theory, see \cite{escardo-oliva}. By noticing that $X \to S \iso \Lens{X}{S}{\I}{\I}$ (so-called \emph{co-states} of $(X,S)$) and $X \iso \Lens{\I}{\I}{X}{S}$ (\emph{states} of $(X,S)$), $\Sel$ can be defined as a functor $\cat{Lens} \to \Cat$ given on objects by
\[
	\Sel(X,S) := \Lens{X}{S}{\I}{\I} \to \Pow{(\Lens{\I}{\I}{X}{S})}
\]
considered as a category with morphisms given by pointwise inclusion, and on morphisms by \emph{pushforward}
\[
	\Sel((X,S) \overset{L}\to (Y,R))(\varepsilon : \Sel(X,S)) := \lambda (k : \Lens{Y}{R}{\I}{\I})\,.\, \{ x \fatsemi L \suchthat x \in \varepsilon(L \fatsemi k) \}.
\]
Most importantly for this paper, $\Sel$ admits a lax monoidal structure
\[
  - \boxtimes - : \Sel(X, S) \times \Sel(Y, R) \longrightarrow {\Sel(X \times Y, S \times R)}
\]
defined by
\[
  (\varepsilon \boxtimes \eta)\; k \coloneqq \{ (x, y) \suchthat x \in \varepsilon\; (\pi_1 k(-,y)), \ y \in \eta\; (\pi_2 k(x,-)) \} \enspace .
\]
We call $\boxtimes$ the \emph{Nash product} of selection functions (called \emph{sum} in \cite{hedges2018backward}), a terminology that will be motivated in \cref{th:normal_form_nash}. More abstractly, one can define a category $\int \Sel$ whose objects are selection functions $(X,S,\varepsilon)$; see \cite{cybernetics} for details. The pushforward construction above makes the forgetful functor $\int \Sel \rightarrow \msf{Lens}$ sending
$(X,S,\varepsilon)$ to $(X,S)$ into an opfibration --- the Nash product makes this a monoidal opfibration.

\paragraph{Open games with agency.}
We can now give the main definition of this section:

\begin{definition}
\label{def:game}
	An \emph{open game with agency} $\G$ is given by:
	\begin{itemize}
		\item A parametrised lens $\A_\G : \PLens{(\Omega, \Comega)}{X}{S}{Y}{R}$ called the \emph{arena} of the game;
		\item A \emph{selection function} $\varepsilon_\G \in \Sel(\Omega, \Comega)$.
	\end{itemize}
	The set of such games is written $\Game{\Omega}{\Comega}{\varepsilon}{X}{S}{Y}{R}$.
	The set~$X$ is called the set of \emph{states}, $Y$ is the set of \emph{moves}, $R$~the set of \emph{utilities} and~$S$ the set of \emph{coutilities}.
	The set $\Omega$ is called the set of \emph{strategy profiles}, and~$\Comega$ is the set of \emph{reward vectors}.
	The ``get'' component $\Omega \times X \to Y$ of the arena $\A$ is called the \emph{play} function; and the ``put'' of the arena $\Omega \times X \times R \to S \times \Comega$ is called the \emph{coplay} function.
\end{definition}

\noindent Arenas are the compositional part in open games with agency.
The reason we make a distinction between arenas and open games is the highly non-compositional nature of selection functions: most operations on arenas do not extend correctly to selection functions on those arenas (as explained in~\cref{sec:recipe}).
However, often the choice of selection functions is somehow canonical and can be done after constructing the arena, which makes compositional constructions on arenas the most important step in the construction of open games with agency.

In open games with agency, players are still kept implicit but they can be introduced explicitly by a suitable indexing of strategies, rewards and selection functions:
\begin{definition}
	Let $P$ be a set. An \emph{arena with players~$P$} is an arena whose set of strategy profiles is of the form $\Omega = \Pi_{p \in P} \Omega_p$, and whose set of reward vectors is $\Comega = \Pi_{p \in P} \Comega_p$.
	An \emph{open game with players~$P$} is given by an arena with players $P$, and a selection function of the form $\varepsilon = \Boxtimes_{p \in P} \varepsilon_p$ for a $P$-indexed family of selection functions $\varepsilon_p : \Sel(\Omega_p, \Comega_p)$.
\end{definition}

A game with players can be pictured as shown below, making clear the separation of strategy profiles and reward vectors into $P$-indexed products of sets $(\Omega_p, \Comega_p)$.

\begin{figure}[H]
	\sctikzfig[1.5]{open-game}
\end{figure}

\paragraph{Equilibria.} A \emph{context} for an open game with agency $\G$ is a pair $(x, k)$ of a state and a costate for $\A_\G$. Given a context, we can define the \emph{equilibria} of $\G$ to be the set
\[
	\Eq_\G \; x \; k \coloneqq \varepsilon_\G(x \fatsemi \A_\G \fatsemi k)
\]
This also shows how open games with agency can be made into open games as defined in~\cite{julesGames}, which consist of a parametrised lens $\A : \PLens{(\Omega, 1)}{X}{S}{Y}{R}$ and an equilibrium predicate $\Eq:X \times (Y \rightarrow R) \rightarrow \Pow(\Omega)$.

Let us now show that the set $\Eq$ deserves the name of set of \emph{Nash equilibria} when dealing with the simplest (mathematically speaking) form of games, \emph{normal-form games}~\cite[Def.~1.2.1]{eogt}.
An \emph{$n$-player normal-form game} is given by a set $A = A_1 \times \ldots \times A_n$ of \emph{actions}, and $u = (u_1, \ldots, u_n)$ a tuple of utility functions $u_p : A \to \RR$.
The goal of each player is to maximize their utility.
We can represent normal-form games as open games with players --- in \cref{sec:ext_translation}, we will similarly give a translation from extensive form games to open games such that the equilibria coincide with Nash equilibria of the original game.

\begin{theorem}
	\label{th:normal_form_nash}
	Let $(A, u)$ be an $n$-player normal-form game. Let
	\[
		\mathcal{N} : \Game{A}{\RR^P}{\Boxtimes_{p \in P} \argmax^\RR_{A_p}}{\One}{\I}{A}{\RR^P}
	\]
	be the open game with set of players $P \coloneqq \{1, \ldots, n\}$ described as follows: its arena $\A_{\mathcal N}$ is the parametrised lens whose ``get'' and ``put'' parts are both identities, and the selection function is the Nash product of $\argmax_{A_p}^\RR$ (since each player is utility-maximizing).
	Then a strategy profile $a \in A$ is a Nash equilibrium for $(A, u)$ if and only if it belongs to $\Eq_{\mathcal N}(\bullet, u)$.
\end{theorem}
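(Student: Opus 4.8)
The plan is to prove the theorem by unfolding every definition in play and checking that the resulting set-builder condition is, verbatim, the collection of Nash equilibria. The first step is to compute the composite co-state $\bullet \fatsemi \A_{\mathcal N} \fatsemi u$, which by the equilibria definition satisfies $\Eq_{\mathcal N}(\bullet, u) = \varepsilon_{\mathcal N}(\bullet \fatsemi \A_{\mathcal N} \fatsemi u)$. Since $\A_{\mathcal N}$ has identity play and coplay, feeding in the trivial state $\bullet$ of $(\One, \I)$ and the co-state $u : A \to \RR^P$ sends a strategy profile $a \in A$ first through the play $\get_{\A_{\mathcal N}}(a, \bullet) = a \in Y = A$, then through $u$ to $u(a)$, and the coplay carries this back unchanged to $\RR^P = \Comega$. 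Hence the composite, living in $\Lens{A}{\RR^P}{\I}{\I} = (A \to \RR^P)$, is literally $u$, so that $\Eq_{\mathcal N}(\bullet, u) = \big(\Boxtimes_{p \in P} \argmax^\RR_{A_p}\big)(u)$.

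The second step is to evaluate this iterated Nash product. Because $\boxtimes$ is given only in binary form, I would first establish by induction on $|P|$, using the displayed formula for $\boxtimes$, the $n$-ary identity
\[
  \Big(\Boxtimes_{p \in P} \varepsilon_p\Big)(k) = \big\{\, a \in A \suchthat \forall p \in P,\ a_p \in \varepsilon_p\big(\pi_p \circ k \circ \iota^p_a\big) \,\big\},
\]
where $\iota^p_a : A_p \to A$ is the partial application $a_p' \mapsto a[p \leftarrow a_p']$ fixing every coordinate of $a$ except the $p$-th. Specialising to $k = u$ and $\varepsilon_p = \argmax^\RR_{A_p}$, and using $\pi_p \circ u = u_p$, the membership condition becomes $a_p \in \argmax^\RR_{A_p}\big(a_p' \mapsto u_p(a[p \leftarrow a_p'])\big)$. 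Finally, unfolding the defining property of $\argmax$ --- that $a_p$ attains the maximal value of that function --- turns this into $u_p(a) \geq u_p(a[p \leftarrow a_p'])$ for all $p \in P$ and all $a_p' \in A_p$, which is exactly the condition that $a$ be a Nash equilibrium of $(A,u)$. The two sets therefore coincide elementwise, proving both directions simultaneously.

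I expect the main obstacle to be bookkeeping rather than conceptual: correctly deriving the $n$-ary formula from the binary $\boxtimes$. The delicate point is that $\boxtimes$ only ever separates one factor from the rest, evaluating each side with the other held fixed; one must check that, composed with the inductive hypothesis, the nested partial applications collapse to the clean ``fix all coordinates but the $p$-th'' map $\iota^p_a$, and that the projections and partial applications line up with the factorisations $A = \prod_p A_p$ and $\RR^P = \prod_p \RR$. Once these identifications are pinned down, no genuinely analytic content remains.
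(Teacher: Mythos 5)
Your proposal is correct and follows essentially the same route as the paper's proof: use the triviality of the arena to reduce $\Eq_{\mathcal N}(\bullet,u)$ to $\big(\Boxtimes_{p \in P} \argmax^\RR_{A_p}\big)(u)$, then read off the Nash condition from the definition of the Nash product and of $\argmax$. The only difference is one of detail: where the paper simply says ``by definition of $\boxtimes$'', you rightly observe that the $n$-ary coordinate-wise formula must be derived from the binary $\boxtimes$ by induction --- a routine verification the paper elides.
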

\begin{proof}
  It suffices to analyze what $\Eq_{\mathcal N}(\bullet, u)$ amounts to.
  By definition of $\boxtimes$, we have $a \in \Eq_{\mathcal N}(\bullet, u)$ if and only if, for every $p \in P$,
  $
    a_p \in \argmax^\RR_{a'_p \in A_p} u_p (a[p \leftarrow a'_p])
  $.
  Hence $\Eq_{\mathcal N}(\bullet, u)$ contains exactly the strategies for which no player has an incentive to deviate --- the definition of Nash equilibria.
\end{proof}


    \section{Operators on open games with agency}
\label{sec:operators}

Open games with agency is a compositional model of game theory. Hence we want operators for building complex open games from simpler ones.
We divide these operators into operators on arenas and operators on games: the difference is in the way selection functions are accounted for.

\paragraph{Sequential composition, parallel composition, and reparametrisation of arenas.}
Arenas being just parametrised lenses, we already defined sequential and parallel composition for them, along with reparametrisation (\cref{sec:para_lens}).
This latter operation is important as it can be used to introduce dependencies between causally distant parts of an arena --- this is our main motivation in defining arenas as parametrised lenses.
We give some examples of how this can be used here; we will also exploit this idea heavily in \cref{sec:ext_translation}.

\begin{example}
\label{ex:addRewards}
  Assume that some player receives a tuple of rewards $\Comega = \RR^n$ (for instance, representing the money earned in each round of an $n$-round game).
  We may want to reparametrise it along a lens ${\resum : (\Omega, \RR) \to (\Omega, \RR^n)}$ whose ``get'' map is the identity, and whose ``put'' sums the rewards, in order to forbid the selection function (which will eventually be attached to the arena) from seeing each separate component.
  This can also be useful (together with regrouping players; see \cref{ex:identifying-players}) to model a coalition in a game, where two players share the profits.
\end{example}

\begin{example}
\label{ex:diag}
  Consider an arena with strategies $\Omega = M \times M$, i.e., the player has to choose two independent moves $m_1, m_2 \in M$.
  We may want to reparametrise this arena along a lens $\clone : (M, \Comega) \to (M \times M, \Comega)$ whose ``get'' map is the diagonal $\Delta : M \to M \times M$, and ``put'' is $\pi_2$, carrying over the reward, so that a strategy now is a single move $m$ which gets cloned in the original arena.
  This will be useful in order to model extensive form games with imperfect information, where a player has to be constrained to choose the same move in all nodes belonging to the same information set.
\end{example}

\paragraph{Composing and regrouping arenas with players.}
We observe that when composing arenas with players $P$ and $Q$, the resulting arena has players $P + Q$. When reparametrising an arena with players, it is not always clear if and how players' identities are preserved.
Often though, we know how we want players to change and we reparametrise along a special kind of lens that just rearranges a strategy profile to present it as if the players involved were different. We call this operation \emph{regrouping}.
Suppose we are given an arena $\A : \Arena{\Pi_{p \in P} \Omega_p}{\Pi_{p \in P} \Comega_p}{X}{S}{Y}{R}$ with players~$P$, and a function $r : P \to Q$.
Define $Q$-indexed families $r_*\Omega$ and $r_*\Comega$ by
\begin{equation*}
  (r_*\Omega)_q = \prod_{r(p) = q} \Omega_p
  \text{ and }
  (r_*\Comega)_q = \prod_{r(p) = q} \Comega_p.
\end{equation*}
Then regrouping along~$r$ is reparametrisation along the lens
\[
  w : \Lens{\Pi_{q \in Q} (r_*\Omega)_q}{\Pi_{q \in Q} (r_*\Comega)_q}{\Pi_{p \in P} \Omega_p}{\Pi_{p \in P} \Comega_p}
\]
whose ``get'' and ``put'' parts are the canonical permutations. Note that $r$ does not need to be injective or surjective for this to be well-defined.

\begin{example}[Dummy players]
  One particular use-case of regrouping is along an inclusion map $r : P \hookrightarrow P'$. For elements of $P'$ not in the image of $r$, this introduces spurious trivial factors in the strategy profiles and rewards through the empty products $\Pi_{r(p) = p'} \Omega_p$, and only changes the indexing of the strategy profiles to a larger set. This can be seen as introducing `dummy players' with exactly one strategy, who do not participate in the interaction described by the given arena.
\end{example}

\begin{example}[Identifying players]
\label{ex:identifying-players}
    Let $\A$ be an arena whose set of players is of the form $P+P$.
    This happens for example when we compose (sequentially or in parallel) two games with the same set of players.
    We can turn it into a $P$-player arena by regrouping along the codiagonal $\nabla = [\id, \id] : P + P \to P$.
    \begin{figure}[H]
      \sctikzfig[1.2]{regrouping-alt}
    \end{figure}
    By definition, the total set of strategy profiles will not change, but the two players $(\inl\; p)$ and $(\inr\; p)$ are now identified as a single player~$p$ with set of strategies $\Omega_{p}^{r_* \A} = \Omega^\A_{\inl\; p} \times \Omega^\A_{\inr\; p}$.
    This operation is fundamental to actually `put players in arena', since it breaks the locality of identities by making long-range identifications of decisions.
    The picture above represents the situation where we sequentially compose two arenas~$\A$ and~$\A'$ with the same set of players, $P = \{p_1, p_2\}$.
    The composite arena $\A \fatsemi \A'$ has four players, but we can regroup them along $\nabla : P+P \to P$ to get a two-player game again.
    This example makes it apparent that regrouping players simply consists in re-ordering the wires of the strategies and rewards.
\end{example}

\paragraph{External choice in arenas.} %
Given an $I$-indexed family of arenas $A_i$, the \emph{external choice} operator $\extch_{i \in I}$ produces a new arena where the environment decides which of the arenas is going to be played:
\begin{mathpar}
  \inferrule{
    \A : (i : I) \to \Arena{\Omega_i}{\Comega_i}{X_i}{S_i}{Y_i}{R}
  }{
    \bigextch_{i \in I} \A_i : \Arena{\Pi_{i \in I} \Omega_i}{\Sigma_{i \in I} \Comega_i}{\Sigma_{i \in I} X_i}{\Sigma_{i \in I} S_i}{\Sigma_{i \in I} Y_i}{R}
  }
\end{mathpar}
Intuitively, players in $\bigextch_{i \in I} \A_i$ must be ready to play any of the arenas~$\A_i$ (as chosen by the state received from the environment), so they have a strategy for \emph{each} of them; and, after playing, they receive a reward coming from \emph{just} the arena~$\A_i$ chosen by the environment.
If all the arenas $A_i$ have the same set of players~$P$, then we can use the canonical function $\Sigma_{i \in I} \Pi_{p \in P}\Comega_i \to \Pi_{p \in P} \Sigma_{i \in I} \Comega_i$ to reparametrise $\bigextch_{i \in I} \A_i$ into an arena with players $P$.
The arena $\bigextch_{i \in I} \A_i$ is defined as follows:
\begin{align*}
  & \play_{\bigextch_{i \in I} \A_i} : \Pi_{i \in I} \Omega_i \times \Sigma_{i \in I} X_i \to \Sigma_{i \in I} Y_i\\
  & \play_{\bigextch_{i \in I} \A_i} \; \varphi \; (i,\, x) = (i,\, \play_{\A_i}\; \varphi_i\; x)\\[2ex]
  & \coplay_{\bigextch_{i \in I} \A_i} : \Pi_{i \in I} \Omega_i \times \Sigma_{i \in I} X_i \times R \rightarrow \Sigma_{i \in I} \Comega_i \times \Sigma_{i\in I} S_i\\
  & \coplay_{\bigextch_{i \in I} \A_i} \; \varphi \; (i, x) \; r =  ((i, u) , (i, s))\ \msf{where}\ (u, s) = \coplay_{\A_i}\; \varphi_i\; x\; r
\end{align*}

\noindent The only component of the games $\G_i$ that is not allowed to depend on~$i$ is the set of utility, $R$. This is because coproducts in $\msf{Lens}$ exist only for this kind of family. Lifting this requirement requires to move to a dependently-typed framework, which we leave for future work.

\begin{example}[Stopped arena]
  Suppose $\A : \Arena{\Omega}{\Comega}{X}{S}{Y}{R}$ is an arena,
  and consider the $0$-player arena $\stop : \Arena{\One}{\One}{\One}{S}{\One}{R}$, which essentially consists of a function $R \to S$.
  We can build the arena $A\,\extch\,\stop$, where the environment can decide either to play in the arena~$A$ or to stop.
  Such an arena is usually pre-composed by an arena $\switch : \Arena{\Two}{S}{X}{S}{X+\One}{S+S}$ that consists of a new player who gets to decide, through a binary choice $\Two := \{\stop,\go\}$, whether the interaction should continue as prescribed by $\A$ or stop there.
  The play function of $\switch$ either forwards the state $x$ to the arena $A$, or decides to stop, depending on the strategy.
  The coplay function is a mere codiagonal $\nabla$.
  This situation can be represented as follows:
  \begin{figure}[H]
    \sctikzfig[1.2]{stop-game}
  \end{figure}
\end{example}

\paragraph{Internal choice in arenas.} %
Given an $I$-indexed family of arenas $A_i$, the \emph{internal choice} operator $\intch_{i \in I}$ produces a new arena where the players decide which of the arenas is going to be played:
\begin{mathpar}
  \inferrule{
    \A : (i : I) \to \Arena{\Omega_i}{\Comega_i}{X_i}{S_i}{Y_i}{R}
  }{
    \bigintch_{i \in I} \A_i : \Arena{\Sigma_{i \in I} \Omega_i}{\Sigma_{i \in I} \Comega_i}{\Pi_{i \in I} X_i}{\Sigma_{i \in I} S_i}{\Sigma_{i \in I} Y_i}{R}
  }
\end{mathpar}
Intuitively, players in $\bigintch_{i \in I} \A_i$ can choose any of the arenas $\A_i$ (for each of which the environment provides a state), so they have a strategy for \emph{just one} of them; and, after playing, they propagate the reward coming from \emph{just} the chosen arena $\A_i$.
This operation is compositional at the level of arenas, but not at the level of arenas with players: there is no canonical way to decide \emph{which} player gets to choose in which arena to play. Hence this information must be provided explicitly in the form of a reparametrisation lens if one again wants an arena with players, as can be seen e.g.\ in \cref{ex:voting} below.
The arena $\bigintch_{i \in I} \A_i$ is defined as follows:
\begin{align*}
  & \play_{\bigintch_{i \in I} \A_i} : \Sigma_{i \in I} \Omega_i \times \Pi_{i \in I} X_i \to \Sigma_{i \in I} Y_i\\
  & \play_{\bigintch_{i \in I} \A_i} \;(i,\, \omega) \; x = (i,\, \play_{\A_i}\; \omega\; x_i)\\[2ex]
  & \coplay_{\bigintch_{i \in I} \A_i} : \Sigma_{i \in I} \Omega_i \times \Pi_{i \in I} X_i \times R \rightarrow \Sigma_{i \in I} \Comega_i \times \Sigma_{i\in I} S_i\\
  & \coplay_{\bigintch_{i \in I} \A_i} \; (i,\, \omega) \; x\; r = ((i, u), (i, s))\ \msf{where}\ (u, s) = \coplay_{\A_i}\; \omega\; x_i\; r
\end{align*}

\begin{example}[Voting]
  \label{ex:voting}
  Consider an $I$-indexed family of arenas $\A_i$, each with players $P$ and rewards~$R$. One way in which players can decide which arena to play in is to vote.
  For each $p \in P$, define $\Omega'_{p} = I \times \Pi_{i \in I}\Omega_{(i,p)}$, where the first factor represents the vote for which arena to play in, and the second factor is a strategy for $p$ to play in all arenas $\A_i$. Let also $i_0 \in I$ be a distinguished option, which we default to if no majority is reached (for example, $\A_{i_0}$ may be an arena in which no game is played and all players instead receive a penalty).
  We define
  \begin{equation*}
    \vote : \Lens{\Pi_{p \in P} \Omega'_{p}}{R^{P}}{\Sigma_{i \in I} \Pi_{p \in P} \Omega_{i,p}}{\Sigma_{i \in I} R^{P}}
  \end{equation*}
  by the following ``get'' and ``put'' maps:
  \begin{eqnarray*}
    \get_{\vote}\; \omega' & = & \begin{cases}
      (i,\, \lambda p\,.\, (\pi_2\; (\omega'\; p))\; i) & \text{if $i$ is the majority vote in $\omega'$}\\
      (i_0,\, \lambda p \,.\, (\pi_2\; (\omega'\; p))\; i_0) & \text{if there is no majority vote in $\omega'$}
    \end{cases}\\
    \put_{\vote} \; \omega'\; (i, r) &=& r
  \end{eqnarray*}
  By reparametrising the arena $\bigintch_{i \in I} \A_i$ along $\vote$, we thus get an arena with players $P$ again.
\end{example}

\paragraph{Operators on games, and a recipe for building open games.}
\label{sec:recipe}

As mentioned before, we introduced operators on arenas since it is in general not possible to define them on games. The culprit lies in the fact that preferences, as represented by selection functions, are not compositional: tensoring selection functions is not the same thing as having a joint selection on the product.
Even if sequential and parallel composition lift straightforwardly from arenas (with players) to games (with players), reparametrisation and both kinds of choice do not do so in any canonical way.
The general way of building games proceeds therefore in two separate steps:
\begin{enumerate}
  \item Use the abundant compositional setting of parametrised lenses to build the arena of the game, in particular using reparametrisation (or just regrouping) operations to link different decisions to the same players.
  The parametrisation lens describes how individual player strategies and rewards are related to the overall strategies and rewards of the arena.
  \item Assign selection functions to each player, and use $\boxtimes$ to put them all together into a global selection function for the game.
  This has to be done last so that we can choose selection functions on the adequate space of strategies.
  Improving this step is the object of ongoing work by the authors.
\end{enumerate}


	\section{The translation from extensive form games to open games}
\label{sec:ext_translation}

We now employ the operators we just defined in order to translate extensive form games (\cref{sec:ext_games}), both with perfect and imperfect information, into open games with players.

\paragraph{With perfect information}
Following the recipe at the end of \cref{sec:recipe}, our translation proceeds from the arena up.
We build the arena associated to a given $\PETree$ by following the compositional (inductive) structure of the data type: every node will be modelled by a single decision composed sequentially with the external choice of the arenas corresponding to the subsequent branches.

A single-player decision is a very simple arena $\Dec{M}{R} = (\id, \Delta) : \Arena{M}{R}{1}{R}{M}{R}$ where the player gets to choose a move in~$M$, and receives some utility in~$R$. This utility is copied and serves as a reward for the player; it is also back-propagated to the previous player as coutility.
These simple arenas are the basic blocks of our translation; each node in the tree gives rise to a single-player decision.
\begin{eqnarray*}
    && \PETtoArena : (T : \PETree) \to \Arena{\Pi_{p \in P} \strategies\; T\;p}{R^P}{\One}{R^P}{\paths_\PET\; T}{R^P} \\
    && \PETtoArena \; (\Leaf \; v) = (\id, \Delta) \\
    && \PETtoArena \; (\Node \; p \; n \; f) = \reparam{(\regroup_p \fatsemi \discard)}{A}\\
    && \qquad \qquad \msf{where}\quad \A = \Dec{[n]}{R^P} \ \fatsemi (\id, \nabla) \fatsemi \bigextch_{m \in [n]} \; (\PETtoArena\; (f\; m))
\end{eqnarray*}
Here, $\nabla : \Sigma_{m \in [n]} R^P \to R^P$ denotes the codiagonal $\nabla = [\id, \ldots, \id]$.
The lens $\discard = (\id, \pi_1)$ is the identity on strategies, and the first projection $\pi_1 : R^P \times \Sigma_{m \in [n]} R^P \to R^P$ on rewards.
The lens $\regroup_p$ essentially regroups players along the map $[\subseteq, \id] : \{p\}+P \to P$, as in \cref{ex:identifying-players}, and redistributes rewards in $R^P$ to each player.

\begin{figure}[h]
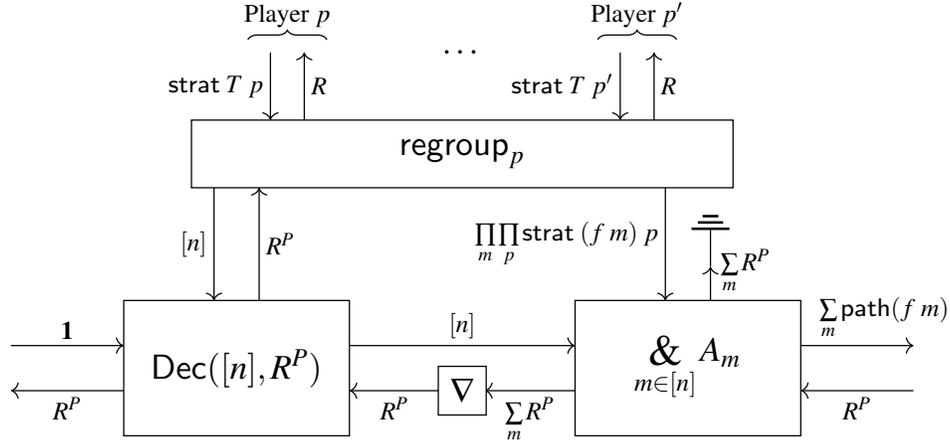

    \sctikzfig[1.2]{translation-PET}
    \caption{The inductive case of $\PETtoArena$.}
    \label{fig:pet-to-arena}
\end{figure}

Figure~\ref{fig:pet-to-arena} should help the reader typecheck the case ``$\Node\; p \; n \; f$'' of the inductive definition above, where we have
written $A_m = \PETtoArena\; (f\; m)$ for the arena obtained by recursively translating the $m$-th subtree of~$T$, of type $A_m : \Arena{\profiles (f\;m)}{R^P}{\One}{R^P}{\paths_\PET (f\;m)}{R^P}$.


\begin{theorem}
    Let $T \in \PETree$ be an extensive form game, with set of players $P$ and rewards in $R$.
    Let~$\G_T$ be the open game with players whose arena is $\PETtoArena\;T$, and whose selection function is
    \begin{equation*}
        \Boxtimes_{p \in P} \argmax^R_{\strategies\;T\;p}.
    \end{equation*}
    Denote by~$\unit$ the only state of $\G_T$. Then the set of equilibria of $\G_T$ for the context $(\unit, \payoff_\PET\; T)$ coincides with the set of Nash equilibria of $T$. \qed
\end{theorem}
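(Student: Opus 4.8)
The plan is to compute the equilibrium set directly from its definition,
$\Eq_{\G_T}(\unit, \payoff_\PET\;T) = \varepsilon_{\G_T}\bigl(\unit \fatsemi \A_{\G_T} \fatsemi \payoff_\PET\;T\bigr)$,
where $\varepsilon_{\G_T} = \Boxtimes_{p \in P} \argmax^R_{\strategies\;T\;p}$ and the closed-off arena $\unit \fatsemi \A_{\G_T} \fatsemi \payoff_\PET\;T$ is a costate, i.e.\ a map $\Omega \to R^P$ with $\Omega = \profiles_\PET\;T$. Everything reduces to two facts: a \emph{correctness lemma} identifying this induced costate with the extensive-form payoff-of-a-play, and then an unfolding of the Nash product and $\argmax$ exactly as in the proof of \cref{th:normal_form_nash}.

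The heart of the argument is the correctness lemma: writing $k^*_T \coloneqq \unit \fatsemi \PETtoArena\;T \fatsemi \payoff_\PET\;T : \profiles_\PET\;T \to R^P$, I claim $k^*_T(\omega) = \payoff_\PET\;T\;(\play_\PET\;T\;\omega)$ for every profile $\omega$. I would prove this by induction on $T$, strengthening the statement to control the backward pass: for every $\omega$, state $\unit$, and utility $r \in R^P$, the coplay of $\PETtoArena\;T$ returns $(r, r)$ --- it copies the incoming utility into both the coutility and the reward vector, independently of $\omega$. The base case $T = \Leaf\;v$ is immediate, since $(\id, \Delta)$ has coplay $r \mapsto (r,r)$ and $\play_\PET, \payoff_\PET$ are trivial. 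For $T = \Node\;p\;n\;f$ I trace the composite $\Dec{[n]}{R^P} \fatsemi (\id, \nabla) \fatsemi \bigextch_{m}\, A_m$ with $A_m = \PETtoArena\;(f\;m)$: the forward pass sends $\omega = (m_0, \phi)$ to the branch $m_0$ chosen by the node player and, via $\bigextch$, into $A_{m_0}$ played under $\phi\;m_0$, reproducing $\play_\PET\;(\Node\;p\;n\;f)\;\omega = (m_0, \play_\PET\;(f\;m_0)\;(\phi\;m_0))$. For the backward pass the induced utility is $r = \payoff_\PET\;(f\;m_0)\,(\play_\PET\;(f\;m_0)\,(\phi\;m_0))$; by the inductive hypothesis $A_{m_0}$ copies it to $(r,r)$, $\bigextch$ tags it as $(m_0, r)$, the codiagonal in $(\id, \nabla)$ collapses it back to $r$, and $\Dec$ duplicates it again via $\Delta$. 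The reparametrisation $\reparam{(\regroup_p \fatsemi \discard)}{\A}$ then discards the $\Sigma_m R^P$ summand of the reward with $\discard = (\id, \pi_1)$ (keeping the $R^P$ produced by $\Dec$) and merely re-indexes strategies and redistributes the reward vector $R^P = \Pi_p R$ to players, so the net coplay is again $r \mapsto (r,r)$. Closing with $\payoff_\PET$ gives $k^*_T(\omega) = \payoff_\PET\;T\;(\play_\PET\;T\;\omega)$, completing the induction.

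With the lemma in hand, I unfold the Nash product. By the definition of $\boxtimes$, a profile $\omega$ lies in $\bigl(\Boxtimes_{p} \argmax^R_{\strategies\;T\;p}\bigr)(k^*_T)$ iff for every player $p$ one has $\omega_p \in \argmax^R_{\omega'_p}\; \pi_p\,k^*_T(\omega[p \leftarrow \omega'_p])$, where $\pi_p : R^P \to R$ is the $p$-th projection. Substituting the correctness lemma rewrites this as $\omega_p \in \argmax^R_{\omega'_p}\; \payoff_\PET\;T\,(\play_\PET\;T\,(\omega[p \leftarrow \omega'_p]))\;p$, i.e.\ the requirement that no player can strictly increase their own payoff by unilaterally changing strategy --- which is exactly the Nash equilibrium predicate of $T$ under the identification $\Omega = \profiles_\PET\;T$. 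Hence the two sets coincide.

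The main obstacle is entirely in the correctness lemma, and specifically in the backward-direction bookkeeping of $\regroup_p \fatsemi \discard$: one must check that discarding the $\Sigma_m R^P$ coutility backpropagated from the subtrees loses no payoff information --- it does not, because the payoff is independently recopied by the node's own $\Dec$ cell through the codiagonal --- and that $\regroup_p$ redistributes the single $R^P$ reward to players so that player $p$'s selection function sees precisely the component $\pi_p$. The forward direction and the final Nash-product unfolding are routine; the delicate point is verifying that the strengthened ``coplay copies the utility'' invariant survives each of $\Dec$, $(\id, \nabla)$, $\bigextch$, and the reparametrisation.
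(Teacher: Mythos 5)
Your proposal is correct, and it takes the approach the paper itself intends but leaves implicit (the theorem is stated with the proof omitted): an induction on $T$ establishing that the closed-off arena computes the costate $\omega \mapsto \payoff_\PET\;T\;(\play_\PET\;T\;\omega)$ --- with precisely the strengthened invariant that coplay copies the incoming utility into both the reward vector and the coutility --- followed by unfolding the Nash product of $\argmax$'s exactly as in the proof of \cref{th:normal_form_nash}. Your bookkeeping of the backward pass through $\Dec$, $(\id,\nabla)$, $\bigextch$, $\discard$ and $\regroup_p$ is accurate, so the argument goes through as written.
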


\paragraph{With imperfect information}
To translate an $\IETree$ into an open game with players, we proceed in three steps: first we create a perfect information game, which we know how to translate to an arena with players from the previous subsection. Then we reparametrise to coordinate the decisions of a player among all nodes labelled with the same information set (similar to \cref{ex:diag}). Finally we add in selection functions as a Nash product of $\argmax$ for each player, as before.
Throughout this section, we fix a set of players $P$ and information sets $(I, \belongs, n)$.
The first step is defined by:
\begin{eqnarray*}
    && \IETtoPET : \IETree \to \PETree\\
    && \IETtoPET\; (\Leaf\; v) = \Leaf\; v\\
    && \IETtoPET\; (\Node\; i\; f) = \Node\; (\belongs\,i)\; (n\; i)\; (\lambda m \,.\, \IETtoPET\; (f\; m))
\end{eqnarray*}
Using $\PETtoArena$, we can then translate this game to an arena with players, but its set of strategies $\strategies_{\PET}\,(\IETtoPET\, T)$ are not correct, as they may allow a player to play different moves at nodes belonging to the same information set. Hence we reparametrise this arena along the lens
\[
    \clone(T) : \Lens{\profiles_\IET\,T}{R^P}{\Pi_{p \in P}\strategies_{\PET\,}\,(\IETtoPET\, T)\,p}{R^P},
\]
whose ``put'' map is the second projection, and whose ``get'' for player $p$ is defined by induction over $T$:
\begin{align*}
  & \get_{\clone(\Leaf\,u)}\; p \; \bullet = \bullet \\
  & \get_{\clone(\Node\,i\,f)}\; p \; (m, \varphi) =
    (\msf{if}\ \belongs\; i = p\ \msf{then}\ m\ \msf{else}\ \bullet,\; \lambda m' \,.\, \get_{\clone(f\,m')}\,p\,(\varphi\, m'))
\end{align*}
This clones the move $m$ to all information sets belonging to $p$.

\begin{example}
    To illustrate the algorithm just described, we translate the right tree of \cref{ex:IET}.
    As instructed above, we first translate the underlying tree without taking into account the information sets, and then reindex the set of strategies of the second player in order to copy the decision $\{\ell,r\}$ in both nodes of the information set.
    The resulting game is depicted below.
    \begin{figure}[H]
        \sctikzfig[1.2]{translation-example2-alt}
    \end{figure}
    In the picture, black dots indicate copy maps, so the only substantial part is the function labelled ``$\play$'', of type $\play : {\{L,R\} \times \{\ell,r\} \times \{\ell,r\} \to \{L\ell,Lr\} + \{R\ell,Rr\}}$.
    It takes as input the choice in $\{L,R\}$ made by Player 1 at the root of the tree, and the two choices in $\{\ell,r\}$ made by Player 2 in the two subtrees (in fact, these two choices are equal because of the copy map contained in the reparametrisation ``$\clone$'').
    The output is the corresponding path in the extensive form tree: $L\ell, Lr, R\ell$, or $Rr$.
    The payoff matrix of the game is not represented in the picture: it can be encoded as a costate $k : \{L\ell, Lr, R\ell, Rr\} \to \mathbb{R}^2$, to be plugged at the bottom right.
\end{example}

\begin{theorem}
    Let $T \in \IETree$ be an extensive form game with imperfect information, with set of players $P$ and rewards in $R$.
    Let $\G_T$ be the open game with arena $\reparam{\clone}(\PETtoArena\,(\IETtoPET\, T)$
    and selection function
    \begin{equation*}
        \Boxtimes_{p \in P} \argmax^R_{\strategies_\IET\;T\;p}.
    \end{equation*}
    Denote by~$\unit$ the only state of $\G_T$. Then the set of equilibria of $\G_T$ for the context $(\unit, \payoff_\PET\; T)$ coincides with the set of Nash equilibria of $T$. \qed
\end{theorem}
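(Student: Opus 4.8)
The plan is to reduce to the perfect-information case by analysing how reparametrisation along $\clone$ transforms the equilibrium costate. Write $B = \PETtoArena\,(\IETtoPET\, T)$ for the perfect-information arena, so that $\A_{\G_T}$ is $B$ reparametrised along the lens $\clone(T)$, whose ``get'' is the cloning map $\profiles_\IET\,T \to \Pi_{p \in P}\strategies_\PET\,(\IETtoPET\, T)\,p$ and whose ``put'' is $\pi_2$. First I would compute the scalar costate $h_{\G_T} = \unit \fatsemi \A_{\G_T} \fatsemi \payoff_\PET\,(\IETtoPET\, T) : \profiles_\IET\,T \to R^P$ of the reparametrised game. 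Reparametrisation precomposes the strategy-profile wire with $\clone(T)$, so in general the new costate is $\omega \mapsto \put_{\clone}(\omega,\, h_B(\get_{\clone}(\omega)))$, where $h_B$ is the costate of $B$; since $\put_{\clone} = \pi_2$ the backward pass is trivial and this collapses to $h_{\G_T}(\omega) = h_B(\get_{\clone}(\omega))$. Reusing the computation behind the perfect-information theorem, $h_B$ is the payoff of play, $h_B(\bar\omega) = \payoff_\PET\,(\IETtoPET\, T)\,(\play_\PET\,(\IETtoPET\, T)\,\bar\omega)$, so that $h_{\G_T}(\omega)$ is the payoff of playing the cloned profile $\get_{\clone}(\omega)$ in $\IETtoPET\, T$.

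The heart of the argument is then a commutation lemma, proved by induction on $T$, stating that playing $\omega$ in the imperfect-information game and playing its clone in the flattened perfect-information game yield the same payoff vector:
\[
  \payoff_\IET\,T\,(\play_\IET\,T\,\omega) \;=\; \payoff_\PET\,(\IETtoPET\, T)\,(\play_\PET\,(\IETtoPET\, T)\,(\get_{\clone}(\omega))).
\]
The leaf case is immediate. At a node carrying information set $i$ owned by player $p$, the clone places at that node exactly the move that $\omega$ assigns to $i$, so the branch taken by the perfect-information play agrees with the one taken by the imperfect-information play, and the inductive hypothesis on the chosen subtree closes the case. The point to verify carefully is that $\get_{\clone}$ makes a consistent choice at all nodes sharing a given information set, even when these lie in different subtrees; but since the payoff depends only on the single realised root-to-leaf trajectory, only local agreement at each visited node is needed, which cloning guarantees by construction.

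Finally I would unfold the Nash product. By definition of $\boxtimes$, a profile $\omega \in \profiles_\IET\,T = \Pi_{p \in P}\strategies_\IET\,T\,p$ lies in $\Eq_{\G_T}(\unit, \payoff_\PET\,(\IETtoPET\, T)) = \bigl(\Boxtimes_{p} \argmax^R_{\strategies_\IET\,T\,p}\bigr)(h_{\G_T})$ iff for every $p \in P$ we have $\omega_p \in \argmax_{\omega'_p \in \strategies_\IET\,T\,p} \pi_p\bigl(h_{\G_T}(\omega[p \leftarrow \omega'_p])\bigr)$, writing $\pi_p : R^P \to R$ for the $p$-th projection. A deviation $\omega[p \leftarrow \omega'_p]$ re-selects precisely the moves player $p$ makes at its information sets, which under $\get_{\clone}$ is exactly a unilateral deviation of $p$ in $T$; combining this with the commutation lemma, the $\argmax$ clause says that no re-choice of $p$'s information-set moves strictly improves $p$'s payoff, which is verbatim the clause of the Nash predicate of $T$ at $p$. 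Quantifying over all players yields the claimed coincidence of $\Eq_{\G_T}(\unit, \payoff_\PET\, T)$ with the Nash equilibria of $T$.

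The main obstacle is the induction of the second step: one must keep track of how $\IETtoPET$ flattens the information-set labelling into per-node labels and how $\get_{\clone}$ re-coordinates them, and check that this round trip preserves the realised play --- and hence the payoff --- uniformly across the subtrees, despite the nodes of one information set being scattered through the tree.
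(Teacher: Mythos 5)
The paper gives no proof of this theorem at all --- the \qed{} sits inside the statement itself, as it does for the perfect-information theorem --- so there is nothing to compare against line by line; judged on its own terms, your argument is correct and supplies exactly the proof the construction is designed to admit. Your three steps are the right ones: because the ``put'' of $\clone(T)$ is a projection, reparametrisation only precomposes the scalar costate with $\get_{\clone(T)}$, giving $h_{\G_T} = h_B \circ \get_{\clone(T)}$; the inductive commutation lemma then identifies this with $\payoff_\IET\,T \circ \play_\IET\,T$ (your observation that only local agreement along the single realised trajectory is needed is the key point that makes the induction go through, since the induction hypothesis is applied to one subtree with the profile implicitly restricted); and unfolding $\Boxtimes_{p}\argmax$ yields the per-player deviation clause over $\strategies_\IET\,T\,p$, which matches the Nash predicate of $T$ because the $p$-component of $\get_{\clone(T)}$ depends only on $\omega_p$, so unilateral deviations in the imperfect-information game map to unilateral (information-set-consistent) deviations in the flattened game. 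The one dependency worth flagging is your reuse of the costate computation $h_B = \payoff_\PET \circ \play_\PET$ for the arena $\PETtoArena\,(\IETtoPET\,T)$: this is the (equally unproven) core of the preceding perfect-information theorem, so invoking it is legitimate here, but a fully self-contained write-up would have to establish it by its own induction on the tree, tracking the backward pass through $\Dec{[n]}{R^P}$, the codiagonal, $\discard$, $\regroup_p$ and the external choice.
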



    \section{Concluding remarks and future work}
\label{sec:conc}

This paper is an experiment in refining the notion of open game to include an explicit treatment of the set of players.
The main technical step that distinguishes our approach from previous treatments of open games is the decoupling of the selection functions from the \emph{arena} of the game.
In particular, the action of \emph{regrouping} players, or more generally \emph{reparametrising} the sets of strategy profiles and reward vectors, is done at the level of arenas, without taking into account the selection functions.

As a sanity check, we have shown how to translate an extensive form game (with possibly imperfect information) to an open game with agency, in a way that respects the compositional structure of players' decisions.
Open games obtained by such translations are but a small subclass of all open games --- for instance, they use $\coplay$ in trivial ways.
Another way to go beyond extensive form games is to use reparametrisation: this allows, for example, to create coalitions of players sharing rewards (\cref{ex:addRewards}), or to assign decisions to players according to a vote (\cref{ex:voting}).
Lastly, in extensive form games, information sets are sometimes used to simulate simultaneity (see \cref{ex:IET}), but open games have a parallel composition operator~$\otimes$ that can model it more explicitly.

Several research directions are left for future work.
One of them is to extend this work to extensive form games built on infinite trees, possibly adapting the construction for repeated open games~\cite{GhaniKLF18}.
We would also like to extend our work to other solution concepts than Nash equilibria: these include subgame-perfect equilibria, sequential equilibria, or evolutionarily stable strategies.
Another unsettled question is to understand how selection functions interact with reparametrisations and, most importantly, what the mathematical nature of apparently uniform assignments of selection functions such as $\argmax$ is (which would allow to lift arenas to games, thereby promoting the compositionality of the former to compositionality of the latter).

Finally, we note that the formalism used here, based on parametrised lenses, is an instance of a general framework dubbed \emph{categorical cybernetics} in a companion paper in this conference~\cite{cybernetics}.


    \bibliography{bibliography.bib}

\end{document}